\newtheorem{prop}{Proposition}
\title{Computational universality of fungal sandpile automata}
\author[1,2]{Eric Goles}
\author[1]{Michail-Antisthenis Tsompanas}
\author[1]{Andrew Adamatzky}
\author[3]{Martin Tegelaar}
\author[3]{Han A. B. Wosten}
\author[1,4]{Genaro J. Mart{\'i}nez}
\address[1]{Unconventional Computing Laboratory, University of the West of England, Bristol, UK}
\address[2]{Facultad de Ingenier\'{i}a y Ciencias, Universidad Adolfo Ib\'{a}\~{n}ez, Santiago, Chile}
\address[3]{Microbiology Department, University of Utrecht, Utrecht, The Netherlands}
\address[4]{Superior School of Computer Science, National Polytechnic Institute, Mexico}
\date{}
\begin{document}

\begin{frontmatter}

\begin{abstract}
Hyphae within the mycelia of the ascomycetous fungi are compartmentalised by septa. Each septum has a pore that allows for inter-compartmental and  inter-hyphal streaming of cytosol and  even organelles. The compartments, however, have special organelles, Woronin bodies, that can plug the pores. When the pores are blocked, no flow of cytoplasm takes place. Inspired by the controllable compartmentalisation within the mycelium of the ascomycetous fungi we designed two-dimensional fungal automata. A fungal automaton is a cellular automaton where communication between neighbouring cells can be blocked on demand. We demonstrate computational universality of the fungal automata by implementing sandpile cellular automata circuits there. We reduce the Monotone Circuit Value Problem  to  the  Fungal  Automaton  Prediction  Problem. We construct families of wires, cross-overs and  gates to prove that the fungal automata are P-complete. 
\end{abstract}

\begin{keyword}
fungi, sandpile automata, computational universality
\end{keyword}

\end{frontmatter}


\section{Introduction}

Fungi are ubiquitous organisms that are present in all ecological niches. They can grow as single cells but can also form mycelium networks covering up to 10~km$^2$ of forest soil~\cite{smith1992fungus, ferguson2003coarse}. Fungi can sense what humans sense and more, including tactile stimulation~\cite{jaffe2002thigmo,kung2005possible}, pH~\cite{van2002arbuscular}, metals~\cite{fomina2000negative}, chemicals~\cite{howitz2008xenohormesis}, light~\cite{purschwitz2006seeing} and gravity~\cite{moore1991perception}. Fungi exhibit a rich spectrum of electrical activity patterns~\cite{slayman1976action,olsson1995action,adamatzkyspiking}, which can be tuned by external stimulation. On studying electrical responses of fungi to stimulation~\cite{adamatzkyspiking} we proposed design and experimental implementation of fungal computers~\cite{adamatzky2018towards}. Further numerical experiments demonstrated that it is possible to compute Boolean functions with spikes of electrical activity propagating on mycelium networks~\cite{adamatzky2020boolean}. At this early stage of developing a fungal computer architecture we need to establish a strong formal background reflecting several alternative ways of computing with fungi. 

In \cite{fungalautomata} we introduced one-dimensional fungal automata, based on a composition of two elementary cellular automaton functions. We studied the automata space-time complexity and discovered a range of local events essential for a future computing device on a single hypha. In present paper we aim to demonstrate a computational universality of fungal automata. To do this we modify state transition rules of  sand pile, or chip firing, automata~\cite{dhar1990self,goles1992sand,christensen1991dynamical,bitar1992parallel,goles1991sand} to allow a control for moving of sand grains, or chips, between neighbouring cells. The local control of the interactions between cells is inspired by a control of cytosol flow control in fungal hyphae~\cite{moore1962fine,lew2005mass,bleichrodt2012hyphal,bleichrodt2015switching,tegelaar2020subpopulations}. Then we used developed tools of sand pile automata universality~\cite{goles1996sand,goles1997universality,chessa1999universality,moore1999computational,gajardo2006crossing,martin2013goles} to show that functionally complete sets of Boolean gates can be realised in in the fungal automata.

The paper is structured as follows. We define two dimensional fungal automata in  Sect.~\ref{Two dimension Fungal Automata}. We represent the fungal automata as sand pile, or chip firing, automata in Sect.~\ref{The chip firing automata}. Section~\ref{Computational Complexity notions} places fungal automata in the context of computational complexity. The P-completeness of the fungal automata is demonstrated via implementation of Boolean circuits in Sect.~\ref{Computational Complexity of the Fungal Automata}. Alternative versions of the Woronin body states updates are presented in Sect.~\ref{Other words of automaton updates}. Future developments and feasibility of practical implementation are discussed in Sect.~\ref{Discussion}.

\section{Two dimensional Fungal Automata}
\label{Two dimension Fungal Automata}

\begin{figure}[!tb]
    \centering
    \subfigure[]{
    \includegraphics[scale=0.5]{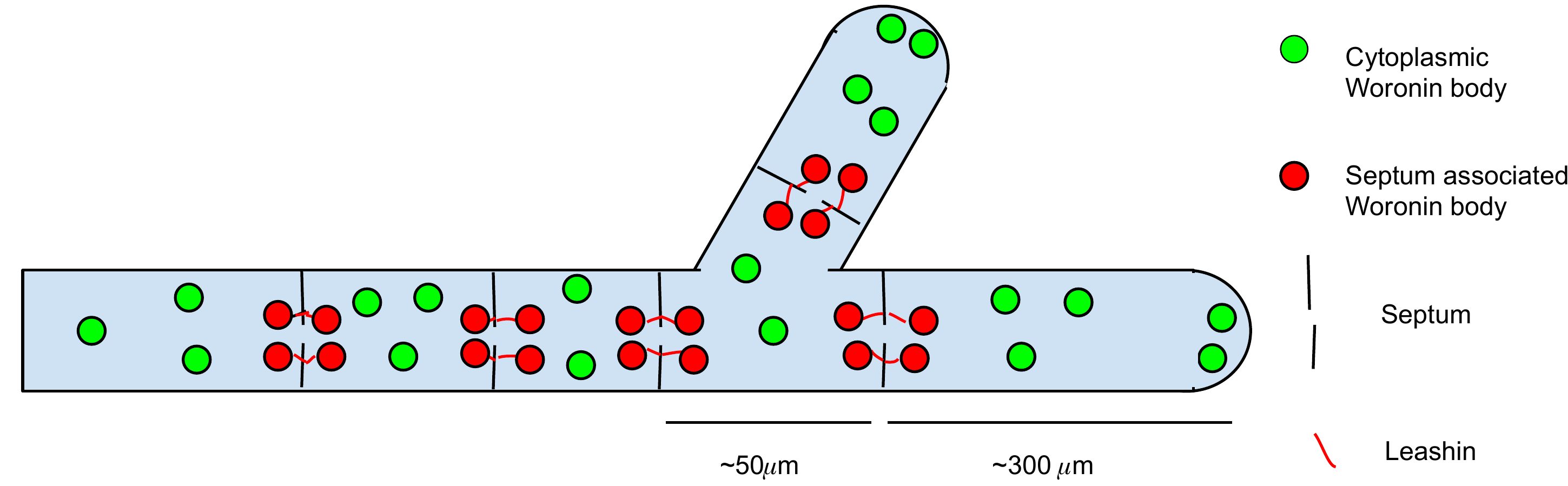}}\\
    \subfigure[]{
    \includegraphics[scale=0.3]{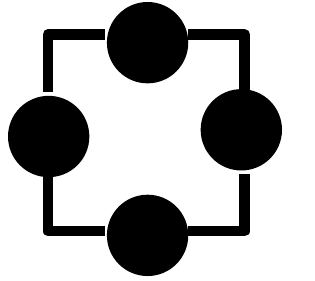}}
    \subfigure[]{
    \includegraphics[scale=0.3]{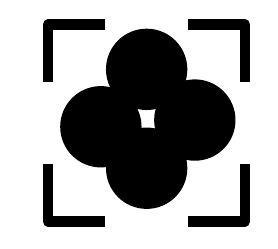}}
\subfigure[]{
    \includegraphics[scale=0.3]{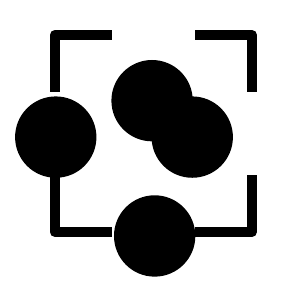}}
    \subfigure[]{
    \includegraphics[scale=0.3]{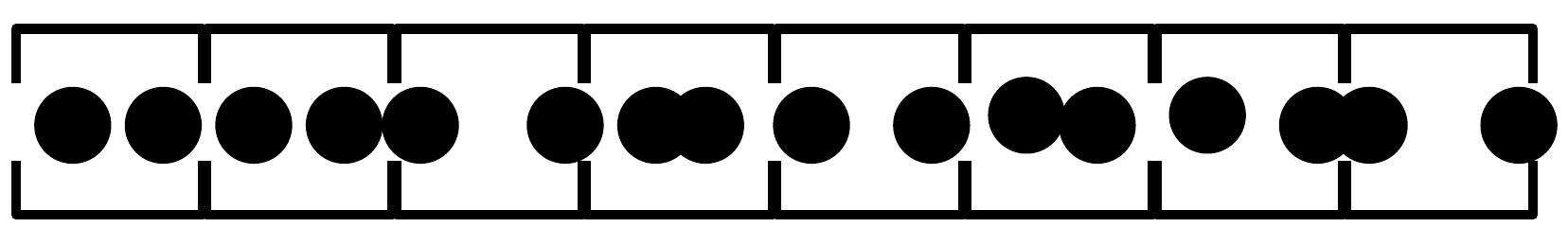}
    }
    \subfigure[]{
    \includegraphics[scale=0.3]{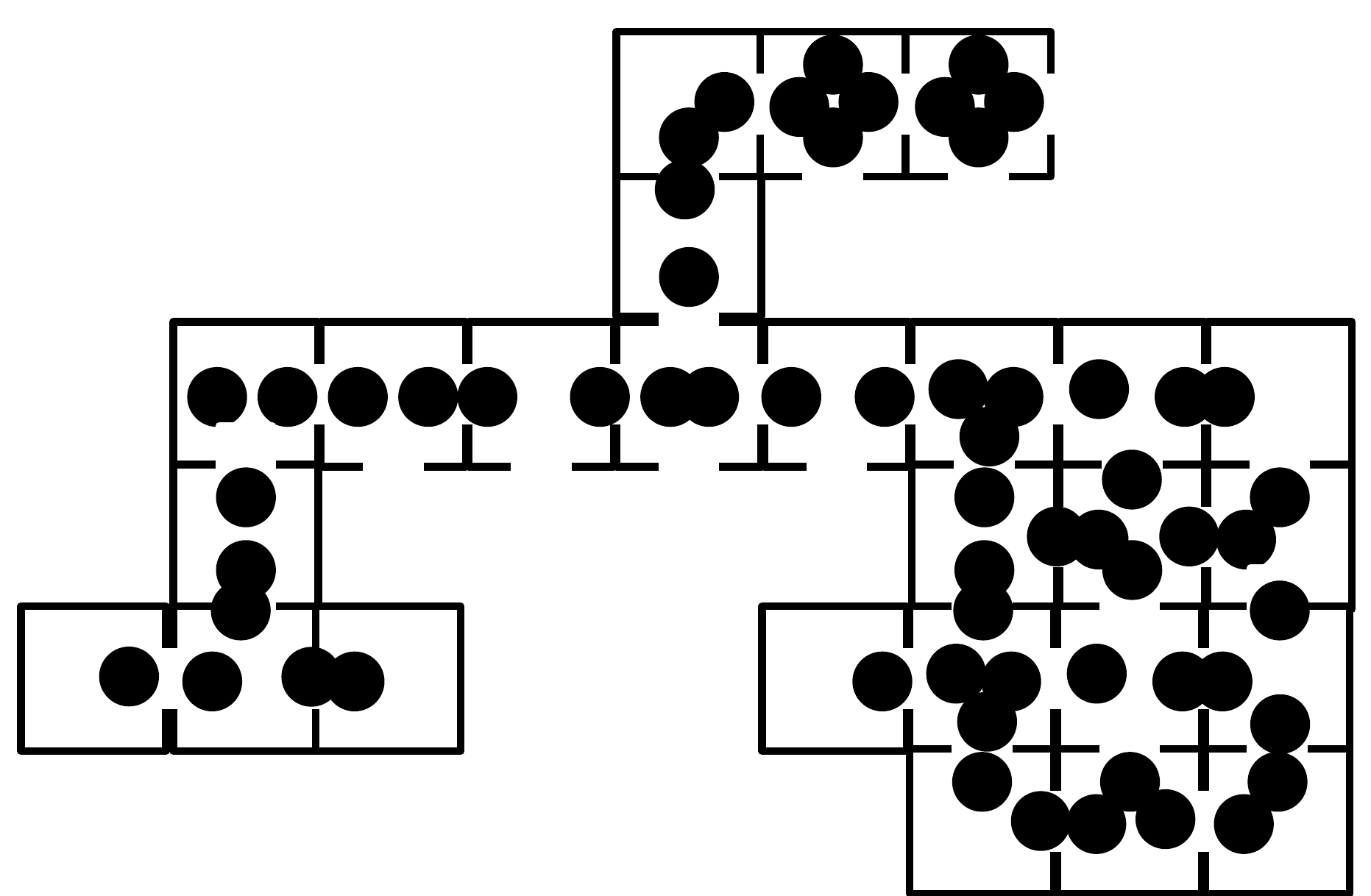}}
    \caption{Fungal automata. (a)~Biological scheme. (bcdef)~Abstract schemes. 
    (b)~All pores are closed, 
    (c)~All pores are open, 
    (d)~North and  East pores are open, 
    (e)~one-dimensional automaton,
    (f)~an example of an arbitrary architecture of fungal automata.
    }
    \label{fig:schemes}
\end{figure}


 Filamentous fungi of the phylum Ascomycota have porous septa that allow for cytoplasmic streaming throughout hyphae and the mycelium~\cite{moore1962fine,lew2005mass}. The pores of damaged hyphae will be plugged by a peroxisome-derived organelle to prevent bleeding of cytoplasm into the environment~\cite{tenney2000hex,reichle1965multiperforate,tey2005polarized,jedd2000new}. These Woronin bodies can also plug septa of intact hyphae~\cite{bleichrodt2012hyphal,steinberg2017woronin}. The septal pore occlusion in these hyphae can be triggered by septal ageing and stress conditions~\cite{bleichrodt2012hyphal,bleichrodt2015selective,tegelaar2020subpopulations}.
 
 A scheme of the mycelium with Woronin bodies is shown in Fig.~\ref{fig:schemes}. An apical compartment has one neighbouring sub-apical compartment, while a sub-apical compartment has a neighbouring compartment at both ends. Because compartments can also branch, they can have one or more additional neighbouring compartments. Thus, the compartment with pores and Woronin bodies is a elementary unit of fungal automata (Fig.~\ref{fig:schemes}bcd). From these compartments one can assemble quasi-one-dimensional (Fig.~\ref{fig:schemes}e) and two-dimensional (Fig.~\ref{fig:schemes}f) structures. 
 
 In this context, let us consider a cellular automaton in the two dimensional grid  $\mathbb{Z}^2$ with the von Neumann neighbourhood, with set of states $Q$  and  a global function $F$. Each cell of the grid has four sides that could be open or closed. An open side means that the information (the state) of both cells is shared. If not, when the side is closed, both sites mutually ignore each other. When every side is open we have the usual cellular automata model~\cite{wolfram1994cellular}. On the other hand, if sides are open or closed in some random or periodical way we get for the same local functions different dynamic behaviours. In this paper we will consider only ``uniform'' ways to open-close the sides. Actually at a given step to open every vertical side (every column of the grid) or every horizontal side, rows of the array. So the fungal automata model becomes a tuple 
 $FA=\langle \mathbb{Z}^2, \mathbf{Q}, V, F, w \rangle$, 
 where $V$ is the von Neumann neighbourhood,  $w$ is a finite word on the alphabet ${H,V}$ (horizontal, vertical). Each iteration of automaton evolution is associated with one letter of ${H,V}$.

In this work we focus on ``particles'' rules. That is to say at each site there are a finite amount of particles or chips, that, according to a specific rule are disseminated in the vicinity of a site. Every step is going synchronously, so each site lose and  receive chips simultaneously. In this context the set of states is $\mathbf{Q}={0,1,2, ...} \subset \mathbb{N}$, the number of particles. 

\section{The chip firing automata}
\label{The chip firing automata}

The chip firing automaton, also know as the sandpile model \cite{GMPHYSC,GMTCS,Moore}, is a particular case of the above described particle automata $\langle \mathbb{Z}^2, V, \mathbf{Q}, F, w \rangle$, with the following local function.  
If a site $v\in \mathbb{Z}^2 $ has $x_v \geq 4$ chips then:
\begin{equation}
\begin{split}
  &  x'_v = x_v -4 \\
  &  \forall u \in V_v  \Rightarrow x'_u = x_u +1
\end{split}
\end{equation}
where $V_v$ is the von Neumann neihborhood of the site $v$.

 By adding the condition of open or close sides of the site proposed here; the rule changes as follows:

\begin{equation}
\begin{split}
  &  x_v \geq 4 \Rightarrow x'_v = x_v - \alpha \\
  &  \forall u \in V_v \text{  such that the gate is open} \\
  & \Rightarrow x'_u = x_u +1
\end{split}
\end{equation}

\noindent where $\alpha$ is the number of gates that are open.

When the rule is applied in parallel on every site, the new state at a site $v$ is:

$$x'_v = x_v - \alpha  + \beta$$

\noindent where $\beta$ is the number of chips which the site $v$ receives from its open and  firing neighbours. 

If every side (columns and  rows) is always open, then we have the usual chip firing automaton. When a  word $w$ of open or close sides is  considered, for instance $w= HHVV$, at each step we open (or close) the rows or columns of the grid periodically 

\begin{equation}
    (HHVV)^* = HHVV \text{  } HHVV \text{  } HHVV \text{ ... }
\end{equation}

\section{Computational Complexity notions}
\label{Computational Complexity notions}

In order to study the complexity of an automaton we can analyse a power of the automaton to simulate Boolean functions, i.e., by selecting specific initial conditions and sites as inputs and  outputs to  determine the different Boolean functions the automaton may compute by its dynamics~\cite{BookComp}. More Boolean functions are founded, more complex is the automaton. A similar notion, related to some prediction problems, appears in the framework of the theory of computational complexity. Essentially this is similar to trying to determine the computational time related to the size of a problem, that a Turing Machine take to solve it. In our context, let us consider the following decision or prediction problem.

 \textbf{PRE}: Consider the chip firing fungal automaton $FA= \langle \mathbb{Z}^2, V, \mathbf{Q}, F, w \rangle$, an initial assignment of chips to every site , $x(0)\in \mathbf{Q}^{\mathbb{Z}^2}$, an integer number $T>0$ of steps of the automaton and  a site 
 $v\in \mathbb{Z}^2 $, such that $x_v(0)=0$. Question: Will $x_v(t)>0$ be for some $t\leq T$?\\
 
Of course one may give an answer by running the automaton at most $T$ steps, which can be done in a serial computer in polynomial time. But the  question is a little more tricky: could we answer faster than the serial algorithm, ideally, exponentially faster in polylogarithmic time in a parallel computer with a polynomial number of processor? 

To answer the questions  we consider two classes of decision problems, those belonging to $P$, the class of problems solved by a polynomial algorithm, and the class $NC$, being the problems solved in a parallel computer in $O(log^q n)$ steps (polylogarithmic time). This is straightforward that $NC\subseteq P$ because any parallel algorithm solved in polylogarithmic time  can be simulated efficiently in a serial computer. But the strict inclusion is a very hard open problem (like the well know $P=NP$). 

An other notion from computational complexity related with the possibility that the two classes melt is P-completeness. A problem is \textit{P-complete} if it is in the class $P$, (that is to say, there exist a polynomial algorithm to solve it) and  every other problem in $P$ can be reduced, by a polynomial transformation, to it. Clearly, if one of those  \textit{P-complete} problems is in $NC$, both classes collapsed in one. So, to prove that a problem is $P-Complete$ gives us an idea of its complexity. 

One well known \textit{P-complete} problem is the \textit{Circuit Value Problem}, i.e., the evaluation of a Boolean Circuit (Boolean function). Roughly because any polynomial problem solved in a serial computer (a Turing machine) can be represented as a Boolean circuit. On the other hand, Boolean circuits intuitively are essentially serial because in order to compute a layer of functions it is necessary to compute previous layers so in principle it is not clear how to determine the output of the circuit in parallel. Further, when the circuit is monotonous, i.e., it admits only {\sc or} and {\sc and} gates (no negations) it is also a \textit{P-complete} problem. This is because negation can be put in the input (the two bits of the variable 0 and  1) and  for the gates which are a negation, to use the De Morgan laws.

The complexity of the chip firing automata was first studied in \cite{GMPHYSC,GMTCS}, where it was proved that in arbitrary graphs (in particular, non-planar ones) the chip firing automata are Turing Universal. To prove this  a universal set of Boolean circuits is built by using specific automata configurations, so, also \textbf{PRE} is \textit{P-complete}. In  a similar way, but in a d-dimensional grid, $\mathbb{Z}^d$, it was proved in \cite{Moore} that for $d\geq 3$ the problem is \textit{P-complete} and  the complexity, until today, remains open for a two-dimensional grid. In \cite{gajardo2006crossing} it was proved that in a two-dimensional grid  and  the von Neumann and  Moore neighbourhood it is not possible to cross signals by constructing wires over quiescent configurations. That can be done only for bigger neighbourhood, so, in fact, over non planar graphs.  

\section{Computational Complexity of the Fungal Automata}
\label{Computational Complexity of the Fungal Automata}

We will study the computational complexity of the Fungal Sand Pile Automaton, by proving that for the word $w = V^4H^4= HHHHVVVV$, the Prediction Problem,  \textbf{PRE},  is \textit{P-complete}. That is to say one can not determine an exponentially faster algorithm to answer unless $NC=P$.

\begin{prop}
 For the  word $H^4V^4 = HHHHVVVV $ the fungal chip firing automaton is \textit{P-complete}.
\end{prop}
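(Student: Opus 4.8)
The plan is to reduce the Monotone Circuit Value Problem (MCVP) to the Fungal Automaton Prediction Problem **PRE**, following the general strategy used for sandpile-automata universality but adapted to the specific periodic word $w = H^4V^4$. Since MCVP is $P$-complete and since **PRE** is clearly in $P$ (we can simulate $T$ steps in polynomial time, as noted in the excerpt), exhibiting a log-space — or at least polynomial-time — reduction from MCVP to **PRE** establishes that **PRE** is $P$-complete for this word. Concretely, I would show how to lay out an arbitrary monotone Boolean circuit on the grid $\mathbb{Z}^2$ as an initial chip configuration, so that the presence or absence of chips propagating to a designated output site $v$ (initially empty) after $T$ steps encodes the truth value computed by the circuit.

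**The core of the argument is a gadget-construction phase.** First I would fix a chip-density encoding of a Boolean signal: a ``firing front'' travelling along a channel represents logical $1$ and a quiescent channel represents $0$, with the clocking supplied by the period-$8$ word $w = HHHHVVVV$. The alternation between four horizontal-open steps and four vertical-open steps is exactly what lets chips move coherently in one axis while the transverse gates are shut, so I would use the $H^4$ phase to advance signals horizontally and the $V^4$ phase to advance them vertically, thereby routing signals around the plane. I would then build, in order: (i) \emph{wires} that carry a signal a controllable distance and allow turns; (ii) a \emph{diode}/one-way segment forcing directionality so signals do not back-propagate; (iii) the monotone gates \textsc{or} and \textsc{and}, realised by merging two incoming channels into a site whose firing threshold ($x_v \geq 4$) and open-gate count $\alpha$ are tuned so that it fires iff one input arrives (\textsc{or}) or iff both arrive (\textsc{and}); and (iv) a \emph{cross-over} gadget letting two signals pass through the same grid region without interacting. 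Each gadget must be verified to be robust under the fixed clock $w$ and to compose with the others at matching spatial scale and temporal phase.

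**The main obstacle I expect is the cross-over.** The excerpt itself recalls \cite{gajardo2006crossing}, where it was shown that on the two-dimensional grid with von Neumann (or Moore) neighbourhood one \emph{cannot} cross two signals over a quiescent background in the ordinary sandpile automaton — planarity genuinely blocks a direct crossing. The whole point of the fungal model is that the controllable, time-multiplexed opening and closing of sides effectively breaks this planarity constraint: by separating horizontal and vertical communication into distinct clock phases, two transversal signals can share a region yet be read out independently, because at any instant only one of the two transport directions is active. So the crux of the proof is to design a cross-over that exploits the $H^4$/$V^4$ alternation to interleave a horizontally-moving and a vertically-moving signal through the same cells, and to check carefully that neither signal corrupts the other across the phase boundary. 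I would spend most of the effort verifying this gadget's timing and chip-count invariants.

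**Finally,** I would assemble the gadgets to mirror the circuit's DAG: place input sites with prescribed initial chips encoding the input bits, wire them through the gate gadgets in topological order using wires and cross-overs for the interconnections, and identify the output wire's terminal cell with the distinguished site $v$ (initially $x_v(0)=0$). Choosing $T$ polynomial in the circuit size (proportional to the longest routed path times the period $8$), the answer to ``is $x_v(t)>0$ for some $t\le T$?'' coincides with the circuit's output being $1$. Since the layout is computable by a log-space transducer, this is a valid reduction and the proposition follows.
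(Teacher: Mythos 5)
Your proposal follows essentially the same route as the paper's proof: membership in $P$ by direct simulation of the dependence cone, then a reduction from the Monotone Circuit Value Problem to \textbf{PRE} via gadgets --- wires, \textsc{and} and \textsc{or} gates, and a cross-over --- with the key insight, also central to the paper, that the $H^4/V^4$ time-multiplexing is what circumvents the planarity obstruction of \cite{gajardo2006crossing} and makes the cross-over possible. The paper realizes these gadgets as explicit chip configurations in its figures (and, unlike your plan, needs no diode: it simply tolerates the spurious backward signal in the gates since the computation proceeds to the right), but the strategy and the decisive ideas coincide with yours.
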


\begin{proof} 
Clearly the problem is in $P$. It suffices to run the automaton at most $T$ steps and  see if the site $i$ changes, which is done in $O(T^3)$: in fact we have to compute the ``cone" between site $i$, its neighbourhood at step $T-1$, $T-2$, and  so on, until the initial values in the site in an square $(2T-1)\times (2T-1)$. So the number of sites is  to consider is $1^2+3^2+5^2+ ...+2T-1)^2$ which is bounded  by a cubic polynomial, so one may compute the state of site $i$ in $T^3$.

To establish the completeness, we will reduce the Monotone Circuit Value Problem to the Fungal Automaton Prediction Problem, \textbf{PRE}. That is to say, to establish specific automaton configurations which simulates  a wire, the {\sc and}  and the {\sc or} gates, as well as a {\sc cross-over}. This last gadget is important to compute non-planar circuits in the two dimensional grid. 

In the constructions below every cell which is not in the diagram is understood initially empty, without chips ($x_i=0$). 

\begin{figure}[!tbp]
    \centering
    \subfigure[]{\includegraphics[width=0.55\textwidth]{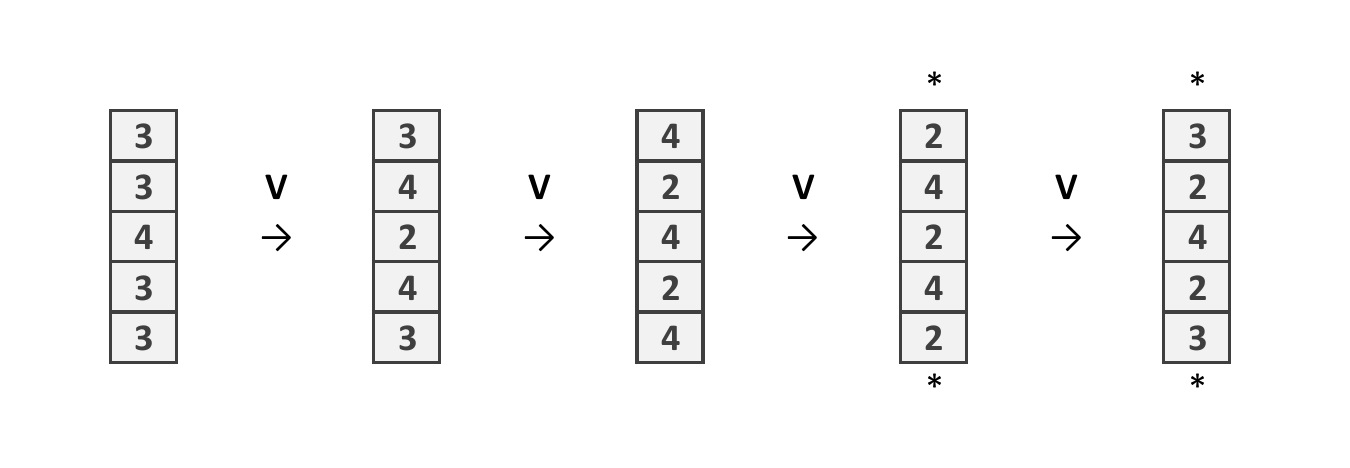}}
    \subfigure[]{ \includegraphics[width=0.8\textwidth]{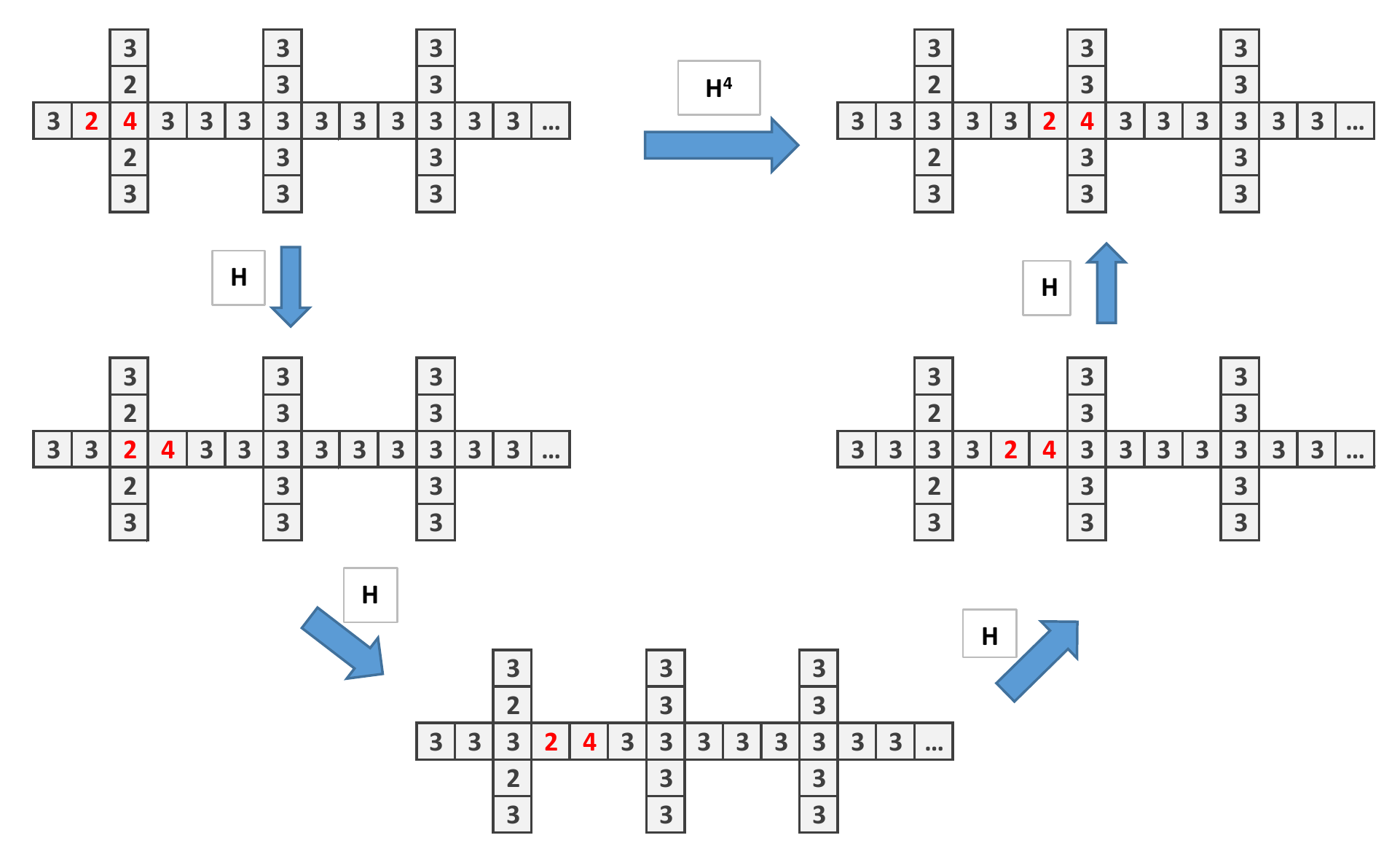}}
     \subfigure[]{\includegraphics[width=0.8\textwidth]{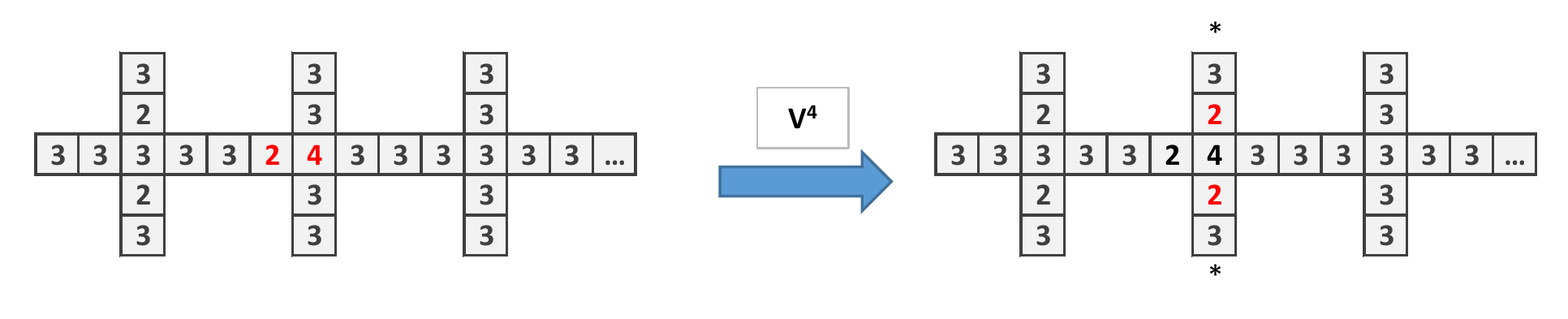}}
    \caption{Wires. 
    (a)~Application of $V^4$ to the wire.
    (b)~Application of $H^4$ to the wire.
    (c)~Final state of the wire after the application of $V^4$.
    }
    \label{fig:wires}
\end{figure}

To construct the wire let us first see what happen when $V^4$ is applied in the particular structure showed in (Fig. \ref{fig:wires}a). The important issue is that the initial site with 4 chips, after the application of $V^4$, remains with 4 chips. Only the adjacent sites and  down change their number of chips.Then one applies $H^4$ to obtain Fig.~\ref{fig:wires}b which is similar to  the initial configuration, shifted to the right  (Fig.~\ref{fig:wires}c). 

\begin{figure}[!tbp]
    \centering
    \includegraphics[width=0.9\textwidth]{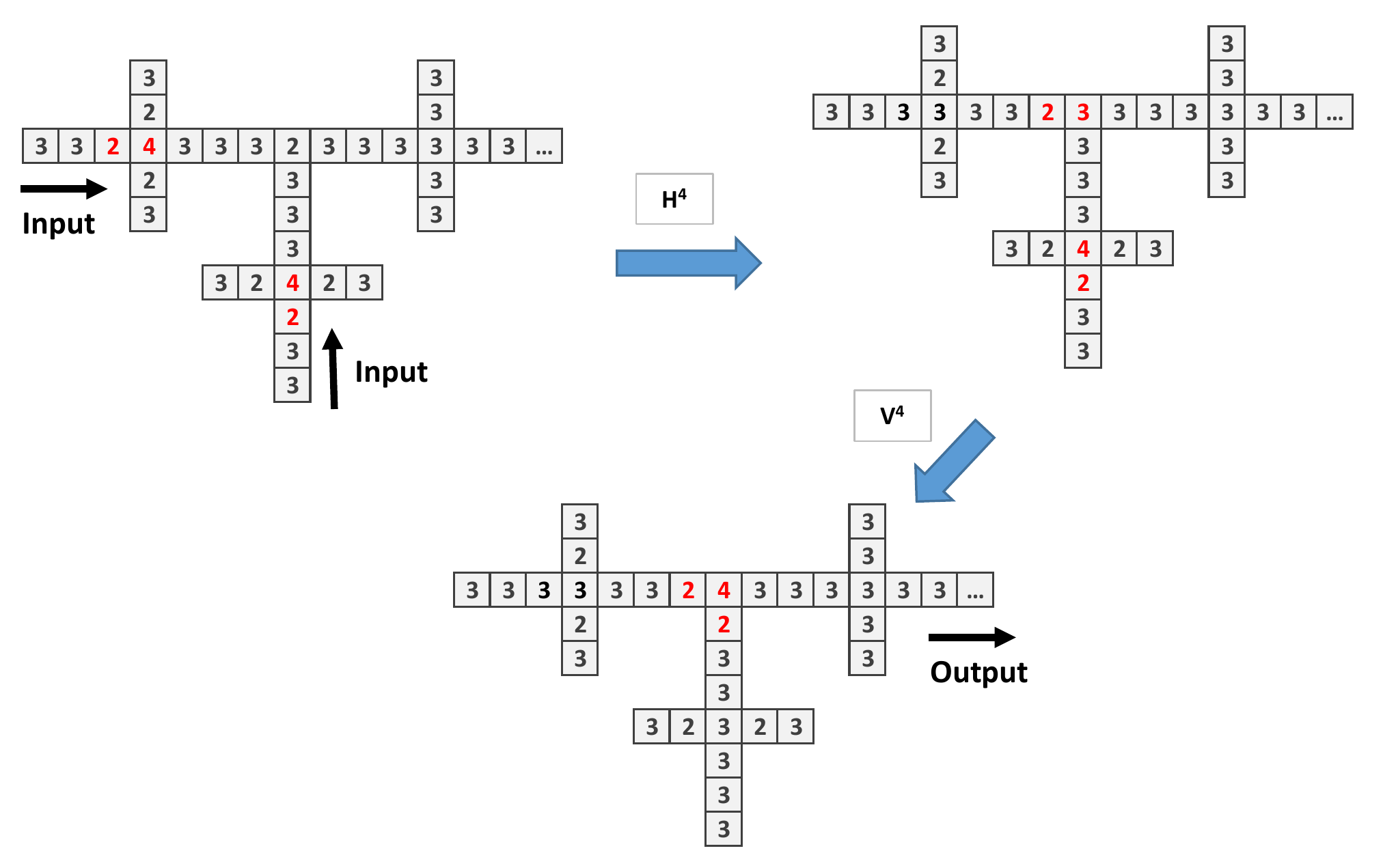}
    \caption{The {\sc and}  gate for inputs with 4 chips (signal= 1).}
    \label{fig:AND1}
\end{figure}

\begin{figure}[!tbp]
    \centering
    \includegraphics[width=0.9\textwidth]{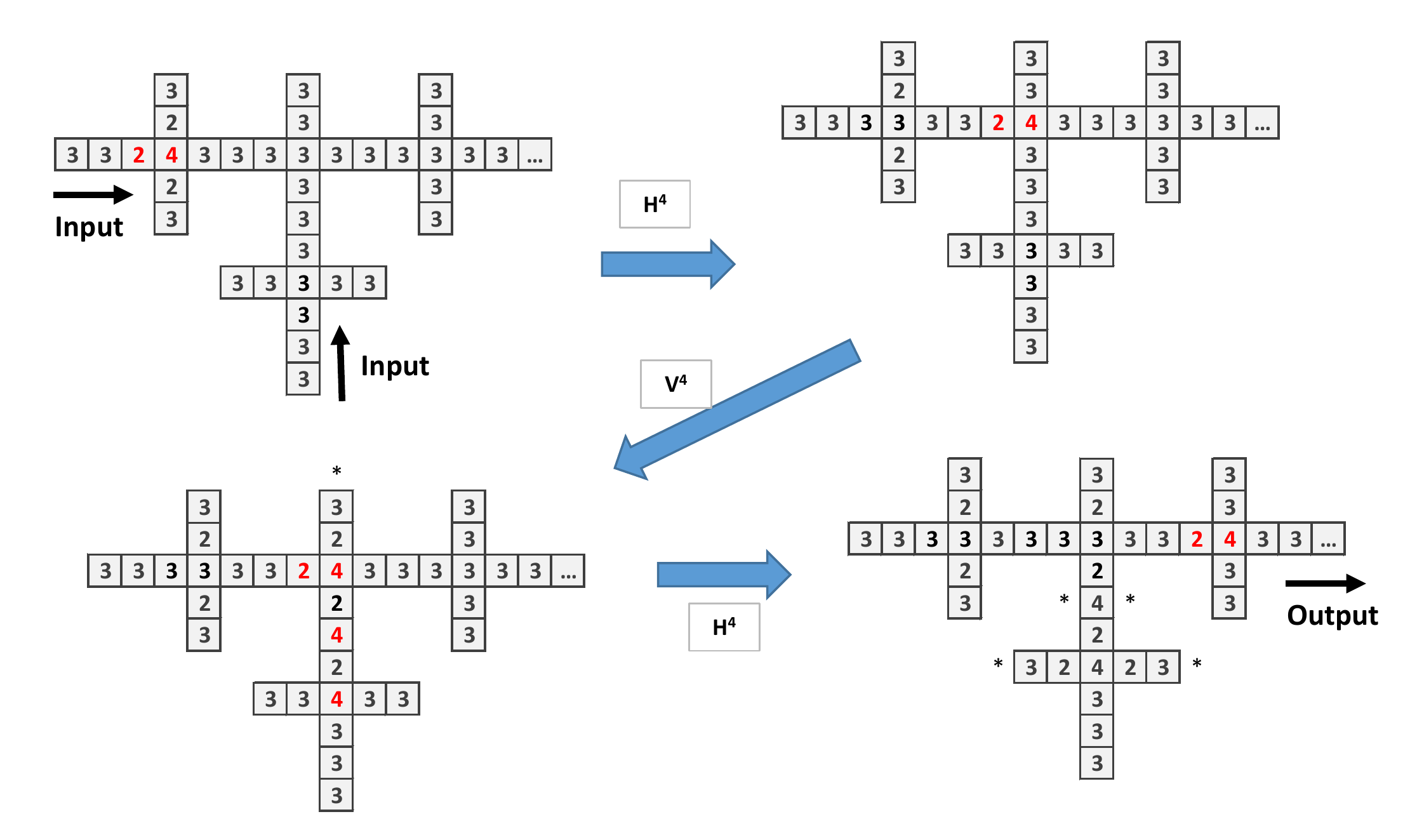}
    \caption{The {\sc or} gate for one firing input.}
    \label{fig:OR1}
\end{figure}

To implement {\sc and} gate  (Fig.~\ref{fig:AND1}) and {\sc or} gate (Fig. \ref{fig:OR1}) we have to connect two wires (this corresponds to  a branching of mycelium).  In the {\sc and} gate two single chips arrive to a central site with 2 chips, so to trigger firing, threshold 3, the signal has to arrive. With $H$ the signal continues to the right, thus the output is 1. The {\sc or} gate functions similarly to {\sc and} gate but the central site has 3 chips. There is an unwanted signal coming back signal  but the computation is made to the right. 

\begin{figure}[!tbp]
    \centering
    \subfigure[]{\includegraphics[width=0.75\textwidth]{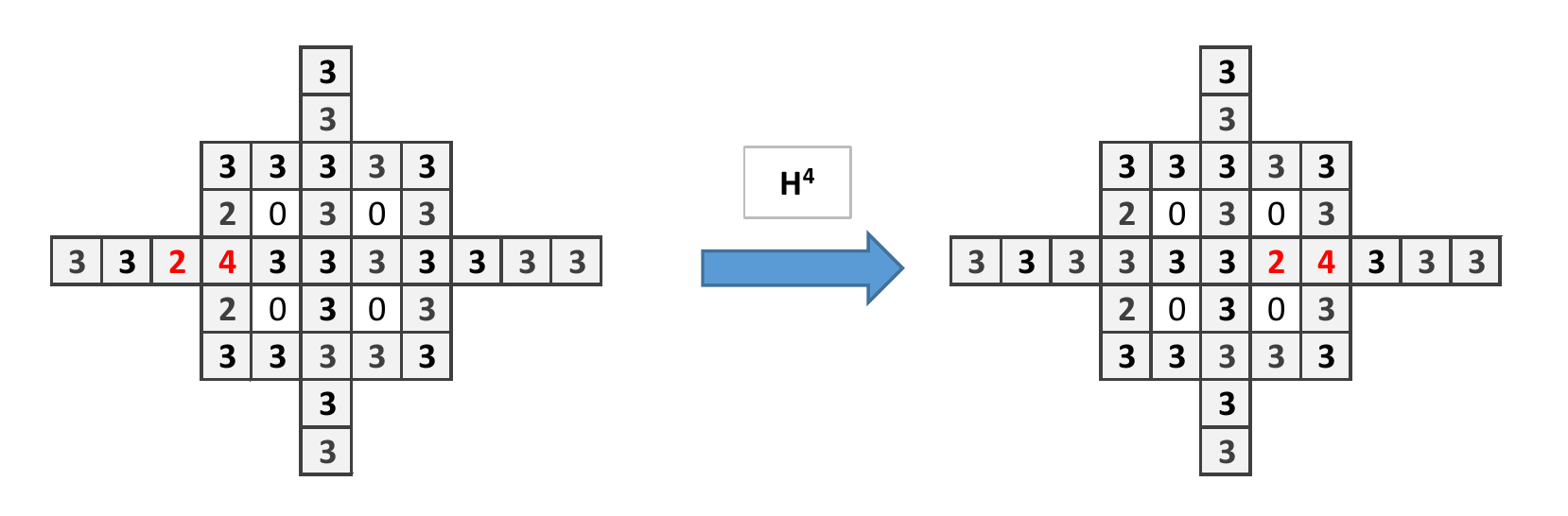}}
    \subfigure[]{\includegraphics[width=0.75\textwidth]{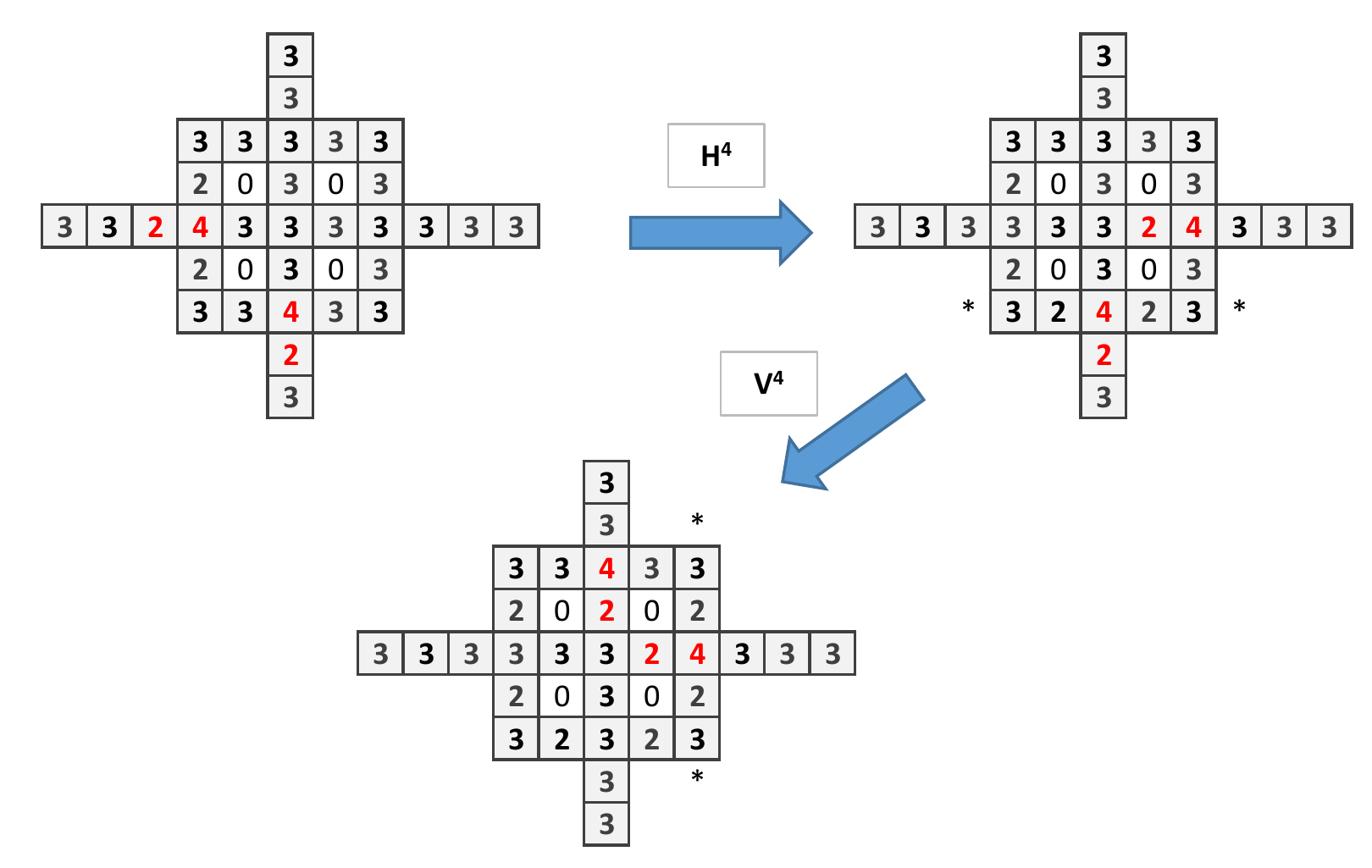}}
    \caption{The cross-over (a) with a horizontal signal and (b) for two signals.}
    \label{fig:crossover}
\end{figure}

The cross-over is demonstrated in Fig.~\ref{fig:crossover}. Here we apply four $V$ and $H$ steps. In Fig. \ref{fig:crossover}a we illustrate the crossing of a horizontal signal (by applying $H^4$). For the vertical signal the dynamics is similar but $V^4$ is applied. Figure~\ref{fig:crossover}b shows the case when two signals arrive at the junction at the same time.
\end{proof}

 \section{Other words of automaton updates}
  \label{Other words of automaton updates}
 
For other shorter words, like the usual chip firing (with sides always open)  and  the words in the set $\mathbf{B}= \{ HV, H^2V^2, H^3V^3\}$ we are able to construct wires, the {\sc or} and  the {\sc and}  gates, but we are unable to built a cross-over. In such cases we can only implement planar circuits with non-crossing wires. 

Below we exhibit the different constructions for those words. It is important to point out that the strategy we used to built the constructions has been by taking as initial framework a quiescent configuration, i.e. a fixed point of the automaton. In \cite{gajardo2006crossing} it has been proved that with this strategy, for a two dimensional grid with the von Neumann or Moore neighbourhoods it is impossible to cross information, i.e. to built a cross-over. It seems that is also the case for the words in the set $\mathbf{B}$. In this sense one may say that our result is the best possible: no shortest word allows to cross information, at least following the quiescent strategy. 

\begin{figure}
    \centering
    \includegraphics[width=0.7\textwidth]{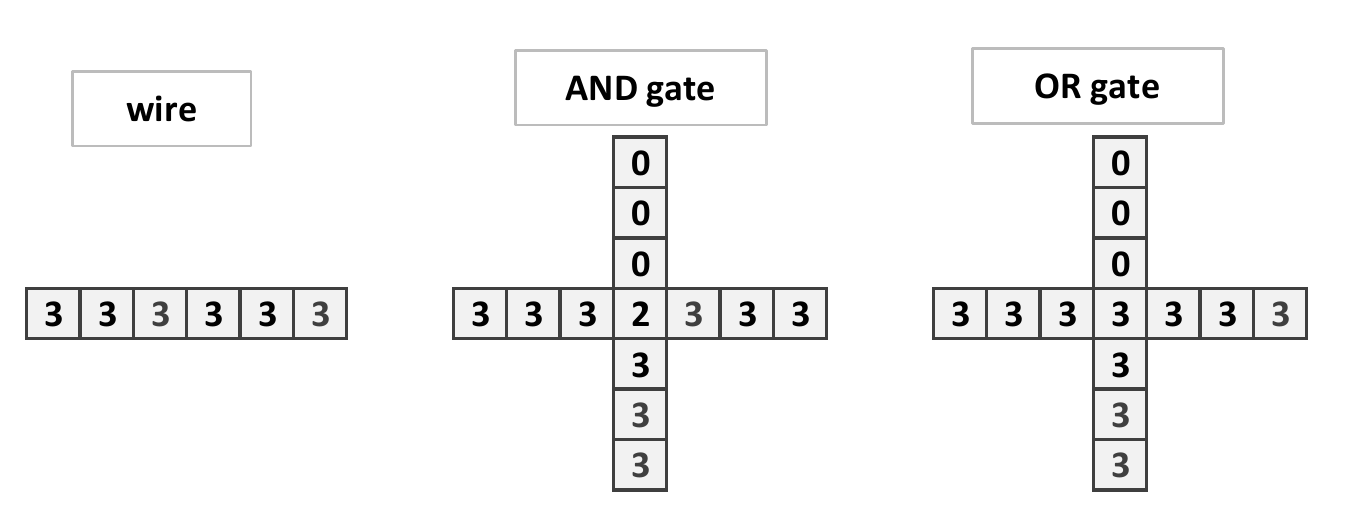}
    \caption{The wire and  gates for the classical chip firing automaton: every side is open.}
    \label{fig:wireAO}
\end{figure}

\begin{figure}[!tbp]
    \centering
    \subfigure[]{\includegraphics[width=0.8\textwidth]{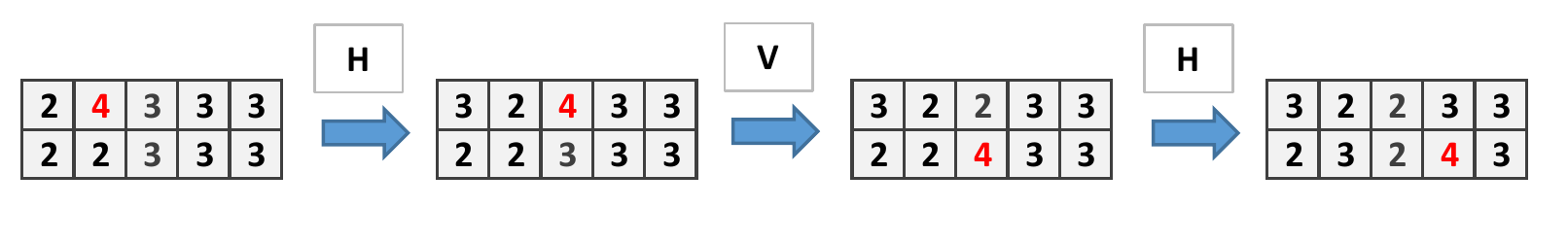}}
     \subfigure[]{ \includegraphics[width=0.75\textwidth]{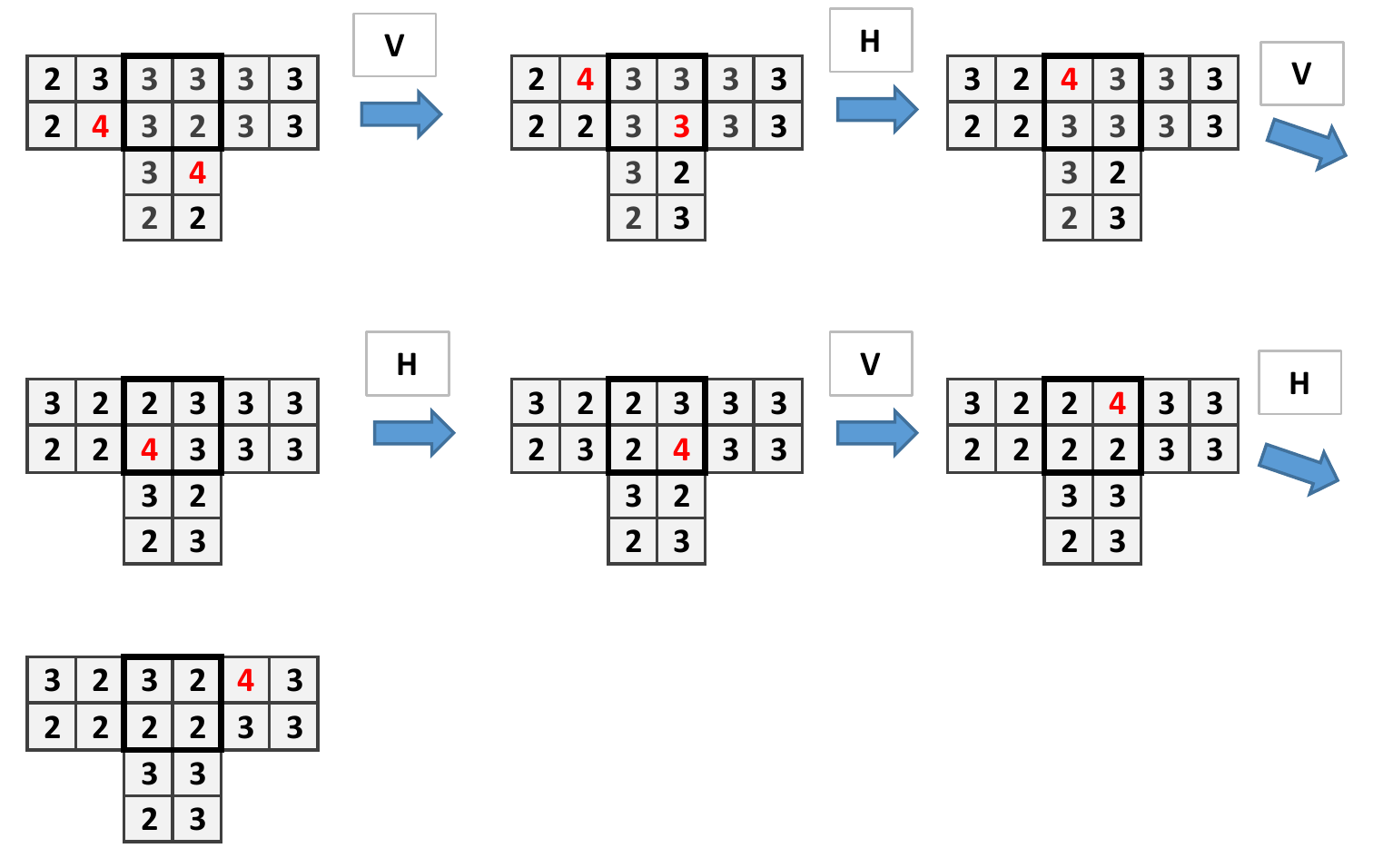}}
    \caption{Operation of the $(HV)^*$ word. 
    (a)~The wire.
    (b)~The {\sc or} gate. 
    }
    \label{fig:HV}
\end{figure}

\begin{figure}[!tbp]
    \centering
    \subfigure[]{\includegraphics[width=0.25\textwidth]{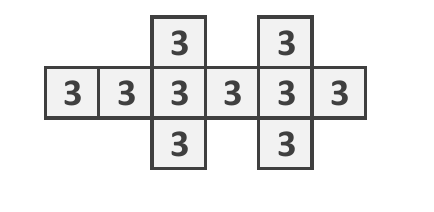}}
     \subfigure[]{\includegraphics[width=0.6\textwidth]{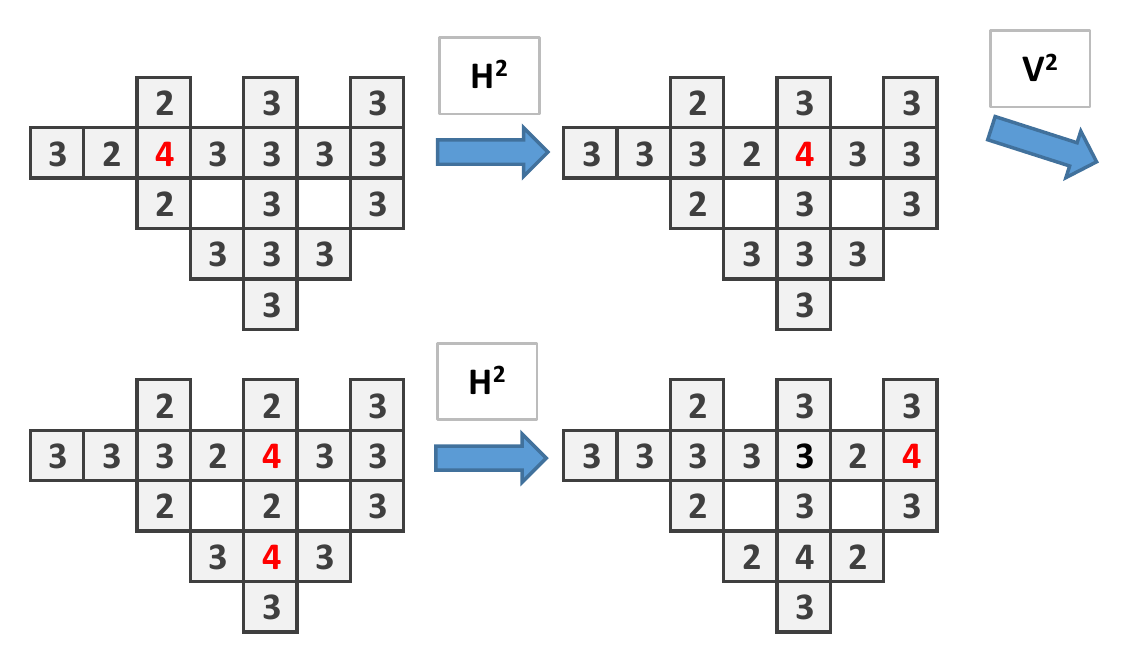}}
      \subfigure[]{\includegraphics[width=0.6\textwidth]{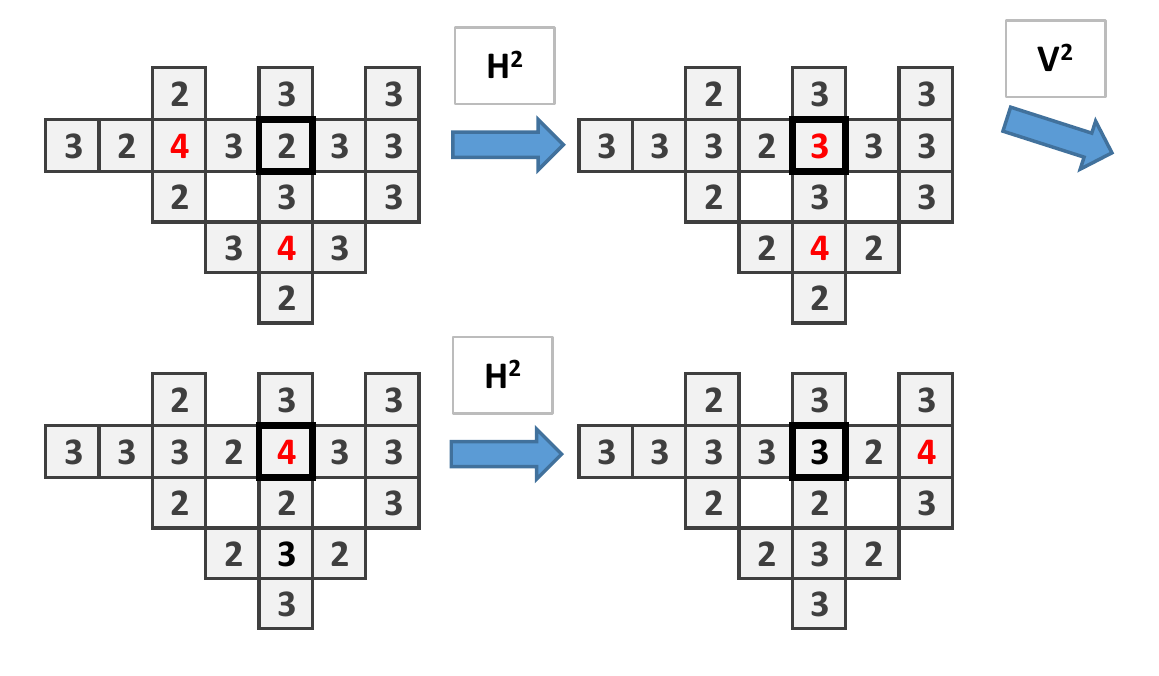}}
    \caption{Operation of the $(HHVV)^*$ word. 
    (a)~The wires.
    (b)~The {\sc or} gate. 
    (c)~The {\sc and} gate.
    }
    \label{fig:HHVV}
\end{figure}

In previous situations for the usual chip firing automaton the constructions are given in Fig.~\ref{fig:wireAO}.  For the word $HV$:  wire is shown in Fig.~\ref{fig:HV}a and the {\sc and} gate in 
Fig.~\ref{fig:HV}.
For the word $HHVV$, the wire is shown in Fig.~\ref{fig:HHVV}a, the {\sc or} gate is shown in Fig.~\ref{fig:HHVV}a  and the {\sc and} gate in Fig.\ref{fig:HHVV}c. 

\section{Discussion} 
\label{Discussion}

Using sandpile, or chip firing, automata we proved that Fungal Automata are computationally universal, i.e., by arranging positions of branching in mycelium it is possible to calculate any Boolean function.

\begin{figure}[!tbp]
    \centering
    \includegraphics[width=0.6\textwidth]{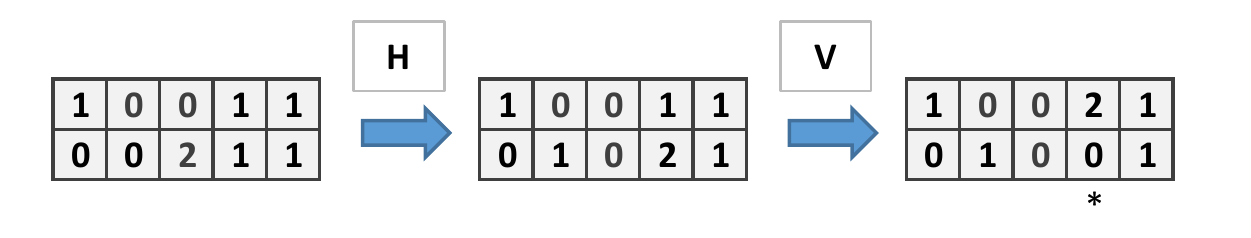}
    \caption{Two chip-firing wire.}
    \label{fig:2chipwire}
\end{figure}

\begin{figure}[!tbp]
    \centering
    \includegraphics[width=0.7\textwidth]{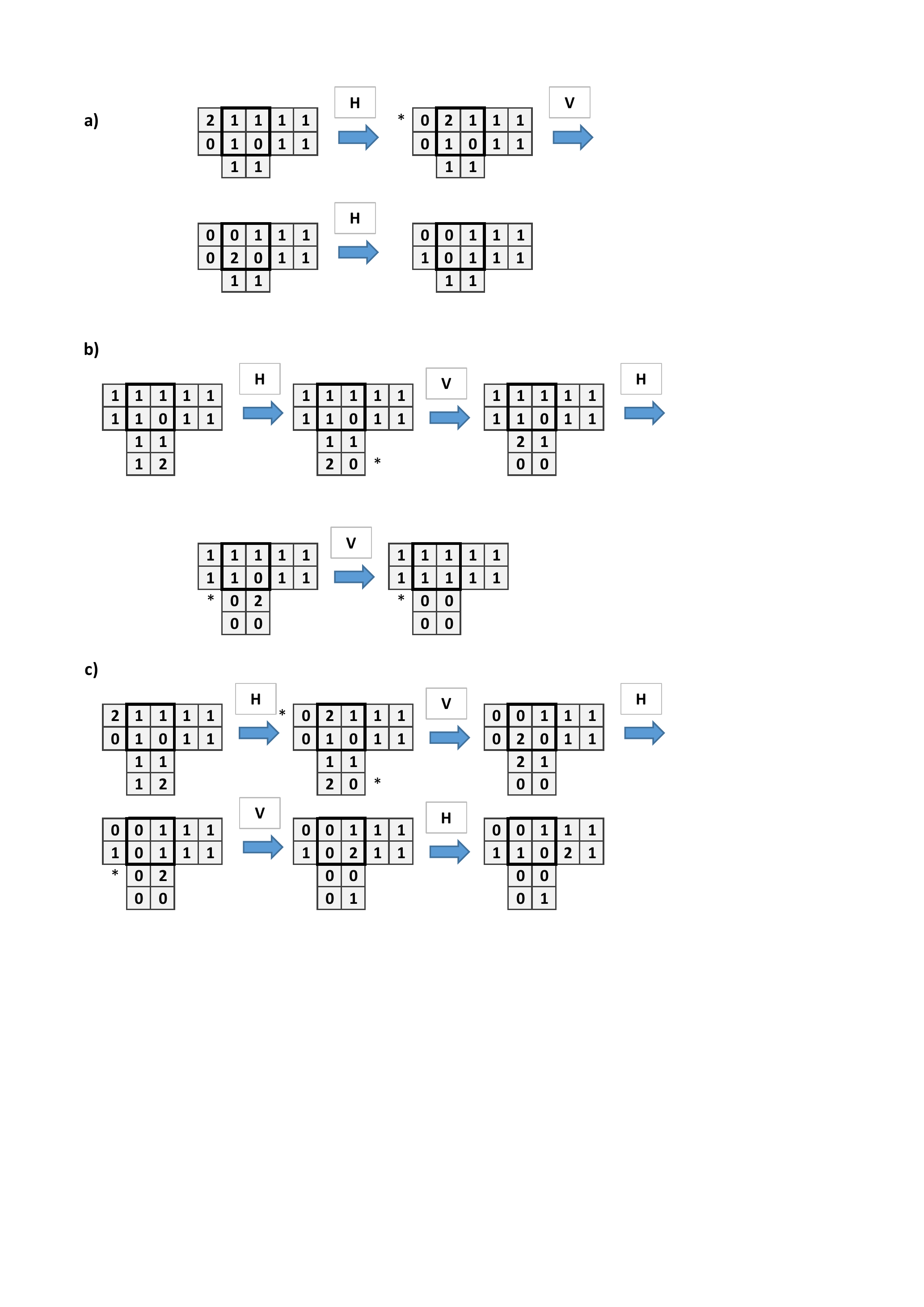}
    \caption{The {\sc and} gate for firing threshold 2. 
    (a)~One horizontal signal. 
    (b)~One vertical signals. 
    (c)~Two signals and  one output signal. }
    \label{fig:2chipAND1}
\end{figure}

The structure of Fungal Automata presented can be relaxed by consider the site firing chips only when it has as many chips as open side. In present model, since at each step there are only two sides can be open, the firing threshold is 2. In this situation, the wire {\sc and} and the {\sc or}  gates can be built as in previous cases but not the cross-over. Dynamics of the wire is shown in Fig.~\ref{fig:2chipwire}, the {\sc and} gate in Fig.~\ref{fig:2chipAND1}.\\

Consider the $N\times N$ grid  $\{0,1,2, ...,N-1\}\times \{0,1,2, ...,N-1\}$. Another possibility to open-close sides could be the following: 
at even steps $t=0,2,...$, open the even rows and  columns and at odd steps,  $t=1,3,5,...$ open the odd  rows and  columns.

If we do that we simulate exactly the Margolous partitions ($2\times 2$ blocks)~\cite{margolus1984physics}. This  give us another way to determine the universality and, in this case, reversibility of this specific Fungal Automaton, because with this strategy one may simulate the Margolous billiard~\cite{margolus2002universal}. Not only that, given any other block partition automaton, say by $p\times p$ blocks there exist a way to open-close the sides which simulates it~\cite{morita1989computation,durand2000reversible,imai2000computation,durand2001representing}.

A significance of the results presented for future implementations of fungal automata with living fungal colonies in experimental laboratory conditions is the following: in our previous research, see details in~\cite{adamatzky2020boolean}, we used FitzHugh-Nagumo model to imitate propagation of excitation on the mycelium network of a single colony of \emph{Aspergillus niger}. Boolean values are encoded by spikes of extracellular potential. We represented binary inputs by electrical impulses on a pair of selected electrodes and we record responses of the colony from sixteen electrodes. We derived sets of two-inputs-on-output logical gates implementable the fungal colony and analyse distributions of the gates~\cite{adamatzky2020boolean}. Indeed, there were combination of functionally complete sets of gates, thus computing with travelling spikes is universal. However, in~\cite{adamatzky2020boolean}, we made a range of assumptions about origins, mechanisms of propagation and interactions 
of impulses of electrical activity. If the spikes of electrical potential do not actually propagate along the mycelium the model might be incorrect. The sandpile model designed in present paper is more relaxed because does not any auto-catalytic processes: avalanches can be physically simulated by applying constant currents, chemical stimulation to mycelium network. This is because the avalanches can be seen as movement of cytoplasm of products of fungal metabolism.

Whilst thinking about potential experimental implementation initiating avalanches is just one part of the problem. Selective control of the Woronin bodies might bring substantial challenges. As previous studies indicate the Woronin bodies can block the pores due to cytoplasmic flow~\cite{steinberg2017woronin} or mechanical stimulation of the cell wall, high temperatures, carbon and nitrogen starvation, high osmolarity and low pH~\cite{tegelaar2017functional,soundararajan2004woronin,jedd2011fungal}. We are unaware of experimental studies on controlling Woronin bodies with light but we believe this is not impossible.

 \section*{Acknowledgement}

AA, MT, HABW have received funding from the European Union's Horizon 2020 research and innovation programme FET OPEN ``Challenging current thinking'' under grant agreement No 858132. EG residency in UWE has been supported by funding from the Leverhulme Trust under the Visiting Research Professorship grant VP2-2018-001 and  from the project the project 1200006, FONDECYT-Chile. 

\bibliographystyle{plain}
\bibliography{ref,fungalcompbib,1dfungalbib}

\end{document}